\newtheorem{theorem}{Theorem}[section]
\newtheorem{definition}{Definition}[section]
\newtheorem{lemma}[theorem]{Lemma}
\newtheorem{proposition}{Proposition}[section]
\newtheorem{remark}{Remark}
\newtheorem*{problem statement}{Problem Statement}
\newtheorem{assumption}{Assumption}
\newtheorem{feasibility problem}{Problem}
\newcommand{\setz}{\mathrm{Z}}
\newcommand{\sete}{\mathrm{E}}
\newcommand{\real}{\mathbb{R}}
\newcommand{\sys}{\mathcal{S}}
\newcommand{\reg}{\mathcal{R}}
\newcommand{\reach}{\mathcal{X}}
\newcommand{\cone}{\mathcal{C}}
\newcommand{\sphere}{\mathcal{S}}
\begin{document}
	\title{Traffic Abstractions of Nonlinear Homogeneous Event-Triggered Control Systems}
	\author{Giannis Delimpaltadakis and Manuel Mazo Jr.\thanks{The authors are with the Delft Center for Systems and Control, Delft university of Technology, Delft 2628CD, The Netherlands. Emails:\texttt{\{i.delimpaltadakis, m.mazo\}@tudelft.nl}. This work is supported by the ERC Starting Grant SENTIENT (755953).}}
	\date{}
	\maketitle
	
\begin{abstract}
	In previous work, linear time-invariant event-triggered control (ETC) systems were abstracted to finite-state systems that capture the original systems' sampling behaviour. It was shown that these abstractions can be employed for scheduling of communication traffic in networks of ETC loops.
	In this paper, we extend this framework to the class of nonlinear homogeneous systems, however adopting a different approach in a number of steps. Finally, we discuss how the proposed methodology could be extended to general nonlinear systems.
\end{abstract}
\section{Introduction}
	The present-day ubiquity of networked control systems (NCS) has raised the research community's awareness regarding the consumption of communication bandwidth of digital control implementations. Specifically, periodic sampling seems to be inefficient, as it leads to unnecessary communication between controllers and sensors. In this context, and promising to reduce the bandwidth used by networked control loops, aperiodic schemes have been proposed: Event-Triggered Control (ETC) \cite{arzen1999, tabuada2007etc, girard2015dynamicetc} and Self-Triggered Control (STC) \cite{velasco2003self, tosample, fiter2012state, delimpa}. For an introduction to ETC/STC see \cite{2012introtoetc_stc}.
	
	Both of them are sample and hold implementations, in which at every sampling time instant the sensors transmit measurements to the controller and, only then, the controller updates the control action. These schemes exploit the system's dynamics to decide when to close the sampling loop, while guaranteeing that certain performance criteria (e.g. stability) are met. In ETC, intelligent sensors monitor the state of the plant, and transmit measurements when a certain state-dependent \textit{triggering condition} is met. On the other hand, in STC the controller is the one to decide about the sampling time, based on previous measurements. Although STC relaxes the need of an intelligent sensory system compared to ETC, it is considered less robust, due to its sampling's open loop nature, since the controller does not receive any, maybe critical, information between samples.
	
	Even though ETC has enjoyed a big share of research, there are still unresolved issues that forbid it being a widespread paradigm. According to the authors' opinion, one of the most prominent problems is the scheduling of communication traffic of ETC loops in shared networks. It is the ETC traffic's inherently aperiodic and unpredictable nature that constitutes the problem challenging. To the authors' knowledge, almost all of the approaches that are proposed to solve the problem belong to the family of controller/scheduler co-design \cite{buttazzo1998elastic, caccamo2000elastic, bhattacharya2004anytime, fontanelli2008scheduling, areqi2015scheduling, lu2002feedback, cervin2005control}. Basically, the control law, the sampling-scheme, and the scheduling of communication are all co-designed, such that resource utilization is efficient, while certain performance guarantees are met. The main drawback of these approaches is their lack of versatility, which is a result of the coupling of the controller, sampling, and scheduler design; e.g. whenever a new control loop joins the network, these techniques have to be applied again from scratch, resulting in a different design.
	
	In \cite{arman_formal_etc}, a different approach was developed for scheduling of traffic produced by linear time-invariant (LTI) ETC systems, which decouples the controller, sampling and scheduler designs, thus being more versatile. In particular, the infinite-state ETC system is abstracted by a finite-state \textit{quotient system} (or \textit{abstraction}) that captures all possible sequences of the ETC system's sampling times. In fact, it is proven that the constructed abstraction \textit{$\epsilon$-approximately simulates} (see \cite{tabuada_book}) the ETC system and can be employed for scheduling, as showcased in \cite{dieky2015symbolic}. To derive the quotient system's states, the state-space is partitioned into a finite number of cones. Afterwards, a convex embedding approach is followed, that derives lower and upper bounds of inter-event times for each conic region/state of the quotient system, which serve as outputs of the abstraction. Finally, the transitions between the abstraction's states are obtained via reachability analysis (e.g. see \cite{dreach}) conducted on each of the conic regions.
	
	In this work, the framework of \cite{arman_formal_etc} is extended to the class of nonlinear homogeneous systems. Nonetheless, the present approach is essentially different in many steps. First, a finite set of times $\{\underline{\tau}_{1},\dots,\underline{\tau}_{q}\}$ that will serve as lower bounds on inter-event times is fixed \textit{a priori}. Then, the state-space is partitioned into regions $\reg_{i,j}$, delimited by intersections of cones with \textit{inner-approximations of isochronous manifolds} that correspond to the chosen times $\underline{\tau}_i$, which were derived in \cite{delimpa}. In this way, the dynamics of the system dictate the state-space partitioning and more control on the abstraction's precision is gained. The upper bounds on inter-event times and the abstraction's transitions are determined \textit{concurrently}, via reachability analysis. To carry out the reachability analysis, an algorithm is proposed that overapproximates the transcendental sets $\reg_{i,j}$ by semi-algebraic ball-segments $\hat{\reg}_{i,j}$. Finally, in Section \ref{discussion section} it is briefly discussed how the presented methodology could be extended to general nonlinear systems. In these terms, the present work contributes to the solution of the scheduling problem of networks of ETC loops.

\section{Notation and Preliminaries}
\subsection{Notation}
	The Euclidean norm of a point $x\in\real^n$ is denoted by $|x|$. We use $\exists!$ to denote existence and uniqueness. $\real_0^+$ denotes the set of non-negative reals. Given a set $X$, $2^X$ denotes the power set of $X$. If $Q\subseteq X\times X$ is an equivalence relation on $X$, the set of all equivalence classes is denoted by $X / Q$.
	
	Consider a system of first order differential equations:
	\begin{equation}\label{ode}
		\dot{\zeta}(t) = f(\zeta(t)),
	\end{equation}
	where $\zeta:\real\to\real^n$ and $f:\real^n\to\real^n$. The solution of the above system with initial condition $\zeta_0$ is denoted as $\zeta(t;\zeta_0)$ (for simplicity we always assume that the initial time $t_0=0$). When $\zeta_0$ is clear from the context, we might omit it. The Lie derivative of a function $h$ at a point $x$ along the flow of $f$ is denoted as $\mathcal{L}_fh(x)$. Similarly, $\mathcal{L}_f^k h(x)= \mathcal{L}_f(\mathcal{L}_f^{k-1} h(x))$ is the $k$-th Lie derivative with $\mathcal{L}_f^0 h(x) = h(x)$.
\subsection{Systems and Simulation Relations}
	To introduce notions related to systems and relations between them, first we give some preliminary definitions.
	\begin{definition}[Metric \cite{combinatorial}]
		Given a set $X$, a function $d:X\times X\to\real_0^+\cup\{+\infty\}$ is a metric on $X$, if it satisfies the following properties for all $x_a,x_b,x_c\in X$:
		\begin{equation*}
			\begin{aligned}
			&\bullet \quad d(x_a,x_b) = d(x_b,x_a),\\
			&\bullet \quad d(x_a,x_b) = 0 \iff x_a=x_b,\\
			&\bullet \quad d(x_a,x_b) \leq d(x_a,x_c)+d(x_b,x_c).
			\end{aligned}
		\end{equation*}
		
	\end{definition}
	The ordered pair $(X,d)$ then forms a \textit{metric space}.
	\begin{definition}[Hausdorff Distance \cite{combinatorial}]
		Consider a metric space $(X,d)$ and two subsets $X_a,X_b\subseteq X$. The Hausdorff distance between $X_a$ and $X_b$ is defined as:
		\begin{equation*}
			d_H(X_a,X_b) := \max\Big\{\sup\limits_{x_a\in X_a}\inf\limits_{x_b\in X_b}d(x_a,x_b), \sup\limits_{x_b\in X_b}\inf\limits_{x_a\in X_a}d(x_a,x_b)\Big\}.
		\end{equation*} 
	\end{definition}
	
	We are ready to proceed to notions related to systems and relations, within the framework of \cite{tabuada_book}.
	\begin{definition}[System \cite{tabuada_book}]
		A system $\mathcal{S}$ is a tuple $(X, X_0, U,$ $\longrightarrow, Y, H)$, where $X$ is the set of states, $X_0$ is the set of initial states, $U$ is the set of inputs, $\longrightarrow \subseteq X\times U\times X$ is a transition relation, $Y$ is the set of outputs and $H:X\to Y$ is the output map.
	\end{definition}
	If $X$ is a finite (infinite) set, then $\mathcal{S}$ is called finite-state (infinite-state). A system $\mathcal{S}$ is called a \textit{metric system} if $Y$ is equipped with a metric $d:Y\times Y\to\real_0^+\cup\{+\infty\}$.
	\begin{definition}[$\epsilon$-Approximate Simulation Relation \cite{tabuada_book}]
		Consider two metric systems $\sys_a,\sys_b$ with $Y_a=Y_b$ and a constant $\epsilon\geq0$. An equivalence relation $Q\subseteq X_a\times X_b$ is an $\epsilon$-approximate simulation relation from $\sys_a$ to $\sys_b$ if it satisfies:
		\begin{itemize}
			\item $\forall x_{0_a}\in X_{0_a}: \quad \exists x_{0_b}\in X_{0_b}$ such that $(x_{0_a},x_{0_b})\in Q$,
			\item $\forall (x_a,x_b)\in Q: \quad d(H_a(x_a), H_b(x_b))\leq\epsilon$,
			\item $\forall x_a \in X_a$ with $(x_a, u_a, x_a')\in \underset{a}{\longrightarrow}:\quad (x_a,x_b)\in Q$ $\implies$ $\exists (x_b, u_b, x_b')\in \underset{b}{\longrightarrow}$ such that $(x_a',x_b')\in Q$.
		\end{itemize}
	\end{definition}
	If there exists an $\epsilon$-approximate simulation relation from $\sys_a$ to $\sys_b$, we say that $\sys_b$ $\epsilon$-approximately simulates $\sys_a$ and write $\sys_a \overset{\epsilon}{\preceq}\sys_b$. Finally, we introduce an alternative definition of \textit{power quotient systems} (for the original one, see \cite{tabuada_book}): 
	\begin{definition}[Power Quotient System \cite{arman_formal_etc}]
		Consider a system $\sys = (X, X_0, U, \longrightarrow, Y, H)$ and an equivalence relation $Q\subseteq X\times X$. The power quotient system of $\sys$ is the tuple $\sys_{/ Q}= (X_{/ Q}, X_{0_{/ Q}}, U_{/ Q}, \underset{{/ Q}}{\longrightarrow}, Y_{/ Q}, H_{/ Q})$, where:
		\begin{itemize}
			\item $X_{/ Q} = X/ Q$,
			\item $X_{0_{/ Q}} = \{x_{/ Q} \in X_{/ Q}:x_{/ Q}\cap X_0 \neq \emptyset \}$,
			\item $U_{/ Q} = U$,
			\item $(x_{/ Q}, u, x'_{/ Q})\in\underset{{/ Q}}{\longrightarrow}$ if $\text{ }\exists (x,u,x')\in \longrightarrow$ such that $x \in x_{/ Q}$ and $x' \in x'_{/ Q}$,
			\item $Y_{/ Q}\subseteq2^Y$,
			\item $H_{/ Q}(x_{/ Q})=\bigcup\limits_{x\in x_{/ Q}}H(x)$.
		\end{itemize}
	\end{definition}
	\begin{lemma}[\hspace{1sp}\cite{arman_formal_etc}] \label{precision lemma}
		Consider a metric system $\sys$, an equivalence relation $Q\subseteq X\times X$ and the power quotient system $\sys_{/ Q}$. For any $\epsilon$ s.t. $\epsilon \geq \sup\limits_{x\in x_{/Q},\text{ }x_{/Q}\in X/Q}d_H(H(x),H_{/ Q}(x_{/ Q}))$, $\sys_{/ Q}$ $\epsilon$-approximately simulates $\sys$, i.e. $\sys \overset{\epsilon}{\preceq}\sys_{/ Q}$
	\end{lemma}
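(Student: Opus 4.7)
The plan is to exhibit an explicit $\epsilon$-approximate simulation relation from $\sys$ to $\sys_{/Q}$ and then verify, in turn, the three bullet points of the definition. The natural candidate is
\[
R = \{(x,x_{/Q})\in X\times X_{/Q} : x\in x_{/Q}\},
\]
i.e.\ relate each state of $\sys$ to its own equivalence class. With this choice each of the three verifications essentially unfolds the definition of the power quotient system, so I do not expect any real obstacle; the proof will mostly consist of bookkeeping.

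First I would handle the initial-state condition: given $x_0\in X_0$, take $x_{0_{/Q}}$ to be the equivalence class containing $x_0$. Since $x_0\in x_{0_{/Q}}\cap X_0$, this set is nonempty, so $x_{0_{/Q}}\in X_{0_{/Q}}$ by the definition of initial states in $\sys_{/Q}$, and $(x_0,x_{0_{/Q}})\in R$ by construction. Next, for the output-matching condition, I would observe that a pair $(x,x_{/Q})\in R$ means $x\in x_{/Q}$, so by the stated hypothesis
\[
d_H\bigl(H(x),H_{/Q}(x_{/Q})\bigr)\;\leq\;\sup_{x\in x_{/Q},\, x_{/Q}\in X/Q}d_H\bigl(H(x),H_{/Q}(x_{/Q})\bigr)\;\leq\;\epsilon,
\]
where $H(x)$ is viewed as the singleton $\{H(x)\}$ inside $2^Y$, so that $d_H$ is well defined between $Y$ and $Y_{/Q}\subseteq 2^Y$.

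Finally, for the transition-matching condition, suppose $(x,x_{/Q})\in R$ and $(x,u,x')\in\longrightarrow$. Let $x'_{/Q}$ be the equivalence class of $x'$, so that $(x',x'_{/Q})\in R$ automatically. Since $x\in x_{/Q}$, $x'\in x'_{/Q}$ and $(x,u,x')\in\longrightarrow$, the definition of $\underset{/Q}{\longrightarrow}$ gives $(x_{/Q},u,x'_{/Q})\in\underset{/Q}{\longrightarrow}$, which is exactly what is required. This closes the verification and establishes $\sys\overset{\epsilon}{\preceq}\sys_{/Q}$.

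The only subtle point worth flagging, rather than a real obstacle, is the output-metric issue: $\sys$ and $\sys_{/Q}$ do not literally share the same output set, since $Y_{/Q}\subseteq 2^Y$ while $Y_{\sys}=Y$. I would address this by embedding $Y$ into $2^Y$ via singletons and using the Hausdorff distance induced by $d$ as the common metric, under which the supremum in the hypothesis is meaningful and the second bullet of the simulation definition is satisfied as shown above.
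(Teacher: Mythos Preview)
The paper does not actually prove this lemma; it is stated with a citation to \cite{arman_formal_etc} and used later without proof. Your argument is correct and is precisely the standard one: take the membership relation $R=\{(x,x_{/Q}):x\in x_{/Q}\}$ and verify the three clauses of the simulation definition directly from the construction of the power quotient, with the minor bookkeeping of embedding $Y$ into $2^Y$ via singletons so that the Hausdorff distance applies.
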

	 	
\subsection{Reachability Analysis}
	\begin{definition}[Reachable Set and Reachable Flowpipe]
		Consider the system \eqref{ode}. Given a set of initial states $\mathcal{I}\subseteq\real^n$, the system's reachable set at time $t_{\star}$ is defined as:
		\begin{equation*}
			\reach_{t_{\star}}^{f}(\mathcal{I}) := \{\zeta(t_{\star};x_0):x_0\in \mathcal{I}\}.
		\end{equation*}
		The reachable flowpipe of the system in the time interval $[\underline{\tau},\overline{\tau}]$ is defined as:
		\begin{equation*}
			\reach_{[\underline{\tau},\overline{\tau}]}^{f}(\mathcal{I}) := \bigcup\limits_{t\in[\underline{\tau},\overline{\tau}]} \reach_{t}^{f}(\mathcal{I}).
		\end{equation*}
	\end{definition}	
	Reachability analysis tools (e.g. dReach \cite{dreach}) generally compute overapproximations of reachable sets and flowpipes. They can also check if the computed flowpipe enters an \textit{unsafe set} $\mathcal{U}_f$, i.e. if $\reach_{[\underline{\tau},\overline{\tau}]}^{f}(\mathcal{I}) \cap \mathcal{U}_f \neq \emptyset$. For ease of exposition, we use the same notation for reachable sets/flowpipes and the results of these tools (i.e. their overapproximations).
	
\subsection{Event-Triggered Control Systems}
	Consider the continuous-time control system:
	\begin{equation} \label{ct sys}
		\dot{\zeta}(t) = f(\zeta(t), \upsilon(\zeta(t))),
	\end{equation}
	where $\zeta:\real\to\real^n$, $\upsilon:\real^n\to\real^m$ and $f:\real^n\times\real^m\to\real^n$. The \textit{sample-and-hold} implementation of \eqref{ct sys} is as follows:
	\begin{equation} \label{snh}
		\dot{\zeta}(t) = f(\zeta(t), v(\zeta(t_k))), \quad t\in[t_k,t_{k+1}),
	\end{equation}
	i.e. the input is constant between two consecutive \textit{sampling times} $t_k$, $t_{k+1}$, and is only updated at sampling times. By introducing the \textit{measurement error}:
	\begin{equation*}
		\varepsilon(t) = \zeta(t_k) - \zeta(t), \quad t\in[t_k,t_{k+1}),
	\end{equation*}
	i.e. the deviation of the current state $\zeta(t)$ from the last sampled state $\zeta(t_k)$, we can write \eqref{snh} as:
	\begin{equation}\label{etc system}
		\dot{\zeta}(t) = f(\zeta(t), v(\zeta(t)+\varepsilon(t))), \quad t\in[t_k,t_{k+1}).
	\end{equation}
	In event-triggered control (ETC) the sampling time instants, or \textit{triggering times}, are defined as follows:
	\begin{equation} \label{trig cond}
	t_{k+1} := t_k + \inf\{t>0: \phi(\zeta(t;x_k), \varepsilon(t)) \geq 0 \},
	\end{equation}
	where $x_k$ is the state measurement from the previous sampling time $t_k$ and $\phi(\cdot,\cdot)$ is the \textit{triggering function}. Equation \eqref{trig cond} is the \textit{triggering condition}, and the difference $t_{k+1}-t_{k}$ is called \textit{inter-event time}. Every point $x_k \in\real^n$ in the state space of \eqref{etc system} admits a specific inter-event time $\tau:\real^n\to\real^{+}$:
	\begin{equation}\label{interevent time}
		\tau(x_k) := \inf\{t>0: \phi(\zeta(t;x_k), \varepsilon(t)) \geq 0 \}.
	\end{equation}
	For ease of exposition, we consider the most popular triggering function for nonlinear systems, derived in \cite{tabuada2007etc}:
	\begin{equation}\label{tabuada trig fun}
		\phi(\zeta(t;x_k), \varepsilon(t)) := |\varepsilon(t)|^2 - \sigma^2|\zeta(t;x_{k})|^2,
	\end{equation}
	where $\sigma$ is a constant. According to \cite{tabuada2007etc}, $\phi(\cdot,\cdot)$ is designed such that the ETC implementation \eqref{etc system}-\eqref{trig cond} is globally asymptotically stable, i.e.: $\phi(\zeta(t), \varepsilon(t)) < 0 \implies \dot{V}(\zeta(t)) < 0,$
	where $V(\zeta(t))$ is a Lyapunov function for the ETC system.
	
	We introduce the extended ETC system with state vector $\xi(t) = \begin{bmatrix}\zeta^{\top}(t) &\varepsilon^{\top}(t) \end{bmatrix}^{\top}\in \real^{2n}$ and dynamics:
	\begin{equation} \label{extended etc system}
	\begin{aligned}
	&\dot{\xi}(t) = \begin{bmatrix}
	f(\zeta(t), v(\zeta(t)+\varepsilon(t)))\\ -f(\zeta(t), v(\zeta(t)+\varepsilon(t)))
	\end{bmatrix} = F(\xi(t)), \text{ }  t\in[t_k,t_{k+1}),\\
	&\xi(t_{k+1}) = \begin{bmatrix}
	\zeta(t_{k+1}^-)\\ 0
	\end{bmatrix}.
	\end{aligned}
	\end{equation}
	While $\zeta(t)$ flows continuously for all time, $\varepsilon(t)$ performs jumps at each sampling time, because the state is measured again and the measurement error becomes zero. The reachable sets of the original ETC system \eqref{snh} are the projection of the reachable sets of the extended one \eqref{extended etc system} to the $\zeta$ variables:
	\begin{equation}\label{projection reach}
		\reach_{t_{\star}}^f(\mathcal{I}_f) = \boldsymbol{\pi}_{\zeta}\reach_{t_{\star}}^F(\mathcal{I}_F),
	\end{equation}
	where $\mathcal{I}_F = \{(x,\boldsymbol{0})\in\real^{2n}:\text{ }x\in\mathcal{I}_f\}$.
	
	\subsection{Homogeneous Systems and Scaling of Inter-Event Times}
	
	We recall results derived in \cite{tosample} regarding the scaling law of homogeneous systems' ETC inter-event times. For clarity, we consider the classical notion of homogeneity, with respect to the standard dilation (for more information see \cite{anta2012isochrony}):
	\begin{definition}[Homogeneous Function \cite{anta2012isochrony}]\label{homogeneous function}
		A function $f: \mathbb{R}^n \rightarrow \mathbb{R}^m$ is homogeneous of degree $\alpha \in \real$, if for all $x\in\real^n$ and $\lambda>0$:
		\begin{equation*}
		f(\lambda x) = \lambda^{\alpha+1}f(x).
		\end{equation*}	
	\end{definition}
	A system \eqref{ct sys} is homogeneous of degree $\alpha$, if $f(\zeta,\upsilon(\zeta))=\tilde {f}(\zeta)$ is homogeneous of the same degree. The following theorem dictates how the inter-event times of a homogeneous ETC system scale along its \textit{homogeneous rays}, i.e. lines starting from the origin:
	\begin{theorem}[Scaling Law \cite{tosample}]
		Consider an ETC system \eqref{etc system}-\eqref{trig cond} homogeneous of degree $\alpha$. Let the triggering function  be homogeneous of degree $\theta$. For all $x\in\mathbb{R}^n$, the inter-event times $\tau(\cdot)$ defined by \eqref{interevent time} scale according to:
		\begin{equation} \label{scaling}
		\tau(\lambda x) = \lambda^{-\alpha} \tau(x), \quad \lambda>0,
		\end{equation}
	\end{theorem}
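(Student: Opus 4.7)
The plan is to reduce the statement to two classical scaling identities---one for the flow of a homogeneous vector field, one for the triggering function---and then perform a change of time variable inside the infimum that defines $\tau(\cdot)$.

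First, I would establish the trajectory-scaling property for the extended ETC dynamics \eqref{extended etc system}. For a vector field $F$ homogeneous of degree $\alpha$ with respect to the standard dilation, a direct differentiation argument (checking that $t\mapsto\lambda\,\xi(\lambda^{\alpha}t;\xi_0)$ satisfies the same ODE with initial condition $\lambda\xi_0$, and invoking uniqueness of solutions) yields
\begin{equation*}
\xi(t;\lambda\xi_0) \;=\; \lambda\,\xi(\lambda^{\alpha}t;\xi_0), \quad \lambda>0.
\end{equation*}
Specialised to a sampling instant, where $\xi_0=(x^{\top},\mathbf{0}^{\top})^{\top}$ and hence $\lambda\xi_0=((\lambda x)^{\top},\mathbf{0}^{\top})^{\top}$, this gives the componentwise identities
\begin{equation*}
\zeta(t;\lambda x) \;=\; \lambda\,\zeta(\lambda^{\alpha}t;x), \qquad \varepsilon_{\lambda x}(t) \;=\; \lambda\,\varepsilon_{x}(\lambda^{\alpha}t),
\end{equation*}
where $\varepsilon_{x}(\cdot)$ denotes the measurement error along the trajectory initiated at $x$.

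Second, applying the homogeneity of $\phi$ of degree $\theta$ to this pair gives
\begin{equation*}
\phi\bigl(\zeta(t;\lambda x),\,\varepsilon_{\lambda x}(t)\bigr) \;=\; \lambda^{\theta+1}\,\phi\bigl(\zeta(\lambda^{\alpha}t;x),\,\varepsilon_{x}(\lambda^{\alpha}t)\bigr).
\end{equation*}
Since $\lambda^{\theta+1}>0$, the sign of $\phi$ is preserved, so $\{t>0:\phi(\zeta(t;\lambda x),\varepsilon_{\lambda x}(t))\geq 0\}$ equals $\lambda^{-\alpha}\cdot\{s>0:\phi(\zeta(s;x),\varepsilon_{x}(s))\geq 0\}$. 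Taking the infimum on both sides and invoking \eqref{interevent time} yields $\tau(\lambda x)=\lambda^{-\alpha}\tau(x)$, as claimed.

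The main obstacle I expect is the very first step: ensuring that the trajectory-scaling identity actually applies to the extended ETC vector field $F$ in \eqref{extended etc system}, because the excerpt only assumes the ideal closed loop $\tilde f(\zeta)=f(\zeta,\upsilon(\zeta))$ is homogeneous, whereas in the sample-and-hold setting the input is $\upsilon(\zeta+\varepsilon)$ rather than $\upsilon(\zeta)$. One has to argue (as in \cite{anta2012isochrony}) that, under the standard dilation, homogeneity of the ideal closed loop---together with the structure of $f$ and $\upsilon$---propagates to the joint scaling $F(\lambda\xi)=\lambda^{\alpha+1}F(\xi)$. Once this is in place, the remainder is essentially a change of variables and the proof concludes.
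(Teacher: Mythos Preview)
The paper does not supply its own proof of this theorem: it is quoted verbatim from \cite{tosample} and used as a black box. Your argument is the standard one from that reference and is correct---flow scaling for a homogeneous vector field, sign preservation under a homogeneous triggering function, and a change of variables $s=\lambda^{\alpha}t$ inside the infimum. The obstacle you flag is real but is precisely the one addressed in \cite{tosample,anta2012isochrony}: there the working hypothesis is that $f$ is homogeneous jointly in $(\zeta,\upsilon)$ (with the dilation acting on both arguments), which immediately gives $F(\lambda\xi)=\lambda^{\alpha+1}F(\xi)$ for the extended field, so your outline closes without further difficulty.
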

	\begin{assumption} \label{assumption}
		For this work, we assume that:
		\begin{itemize}
			\item The ETC system \eqref{etc system} is homogeneous of degree $\alpha \geq 1$.
			\item The triggering function is the one defined in \eqref{tabuada trig fun}.
		\end{itemize}
	\end{assumption}
	\begin{remark}
		As in \cite{delimpa}, the results of this work are applicable to more general triggering functions $\phi(\cdot)$ satisfying:
		\begin{itemize}
			\item $\phi(\cdot)$ is homogeneous of degree $\theta \geq 1$, with $r_i=1$,
			\item for all $x\in\mathbb{R}^n\setminus\{\boldsymbol{0}\}$, $\phi(\zeta(0;x), \varepsilon(0))<0$ and $\exists t_x\in(0,+\infty)$ such that $\phi(\zeta(t_x;x), \varepsilon(t_x))=0$.
		\end{itemize}
		In Section VI, the extension to general nonlinear systems and triggering functions is briefly discussed.
	\end{remark}
\section{Problem Statement}
	We aim at constructing traffic abstractions of nonlinear homogeneous ETC systems \eqref{etc system}-\eqref{trig cond}. This task has been already carried out for LTI systems in \cite{arman_formal_etc}. Thus, we adopt a similar problem formulation. We introduce the system 
	\begin{equation}\label{infinite sys}
	\sys = (X, X_0, U, \longrightarrow, Y, H),
	\end{equation}
	where $X,X_0\subseteq\real^n$, $U=\emptyset$ (the system is autonomous), $Y\subseteq\real^+$, $H(x) = \tau(x)$ and the transition relation $\longrightarrow\subseteq X\times X$ is such that $(x,x')\in\longrightarrow \iff \zeta(\tau(x);x)=x'$.
	Observe that the set of output sequences of the above system is the collection of all sequences of inter-event times that the ETC system \eqref{etc system}-\eqref{trig cond} can exhibit, i.e. it captures exactly the traffic generated by the ETC system. However, \eqref{infinite sys} is an infinite-state system and cannot serve as a finite handleable abstraction of the ETC system. This leads us to the following:
	\begin{problem statement}
		Consider the system \eqref{infinite sys}. Construct an equivalence relation $Q\subseteq X\times X$ and a power quotient system $\sys_{/ Q}= (X_{/ Q}, X_{0_{/ Q}}, U_{/ Q}, \underset{{/ Q}}{\longrightarrow}, Y_{/ Q}, H_{/ Q})$ with:
		\begin{itemize}
			\item $X_{/ Q} = X/Q := \{\reg_{1,1},\dots,\reg_{i,j},\dots,\reg_{q,m}\}$,
			\item $X_{0_{/ Q}} = \{\reg_{i,j}: \reg_{i,j}\cap X_0\neq\emptyset \}$,
			\item $U_{/Q}=\emptyset$,
			\item $(x_{/Q},x'_{/Q})\in\underset{{/ Q}}{\longrightarrow}$ if $\exists x \in x_{/Q}$ and $\exists x' \in x'_{/Q}$ such that $\zeta(H(x);x)=x'$,
			\item $Y_{/Q}\subseteq 2^Y = 2^{\real^+}$,
			\item $H_{/ Q}(x_{/Q}) := [\underline{\tau}_{x_{/Q}},\overline{\tau}_{x_{/Q}}]$, with:
					\begin{equation}\label{intervals}
					\underline{\tau}_{x_{/Q}} \leq \inf\limits_{x\in x_{/Q}}H(x), \quad \overline{\tau}_{x_{/Q}} \geq \sup\limits_{x\in x_{/Q}}H(x).
					\end{equation}
		\end{itemize}
	\end{problem statement}
	The reason to use the $i,j$-subscript on $\reg_{i,j}$ will become clear later. Note that: a) the power quotient system's states are regions in the ETC system's state-space, b) a transition in the quotient system takes place when the ETC system triggers and c) the outputs of the quotient system are intervals containing the corresponding outputs of  \eqref{infinite sys}, i.e. the ETC system's inter-event times. Hence, each possible sequence of the ETC system's inter-event times is captured by an output sequence of the power quotient system; the power quotient system abstracts the ETC system's timing behaviour.
	
\section{Constructing the Abstraction}
	In this section, we construct the abstraction $\sys_{/ Q}$, i.e. we construct $Q$, $X_{/ Q}$, $H_{/ Q}(x_{/Q})$ and $\underset{{/ Q}}{\longrightarrow}$. In \cite{arman_formal_etc}, to get $X_{/ Q}$ the state-space is partitioned into a finite number of cones. Afterwards, the bounds $\underline{\tau}_{x_{/Q}},\overline{\tau}_{x_{/Q}}$ for each conic region are computed via LMIs. To obtain the transitions, reachability analysis on each set $x_{/Q}$ is performed..
	
	In this work we adopt a different approach, regarding the partitioning of the state-space and the computation of $\underline{\tau}_{x_{/Q}}$ and $\overline{\tau}_{x_{/Q}}$. In particular, first a finite set of lower bounds on inter-event times $\{\underline{\tau}_{1},\dots,\underline{\tau}_{q}\}$ is fixed, that serve as $\underline{\tau}_{x_{/Q}}$. Then, by considering intersections of cones with \textit{inner approximations of isochronous manifolds}, previously constructed in \cite{delimpa}, the regions $\{\reg_{1,1},\dots,\reg_{i,j},\dots,\reg_{q,m}\}$ are derived such that:
	\begin{equation} \label{lower bounds}
		\forall x \in \reg_{i,j}: \quad \tau(x)\geq \underline{\tau}_{i},
	\end{equation}
	which implies that the first part of \eqref{intervals} is satisfied. Afterwards, to perform reachability analysis on the regions $\reg_{i,j}$, and since they obtain a transcendental representation, we overapproximate them by semi-algebraic ball segments $\hat{\reg}_{i,j}$. Finally, the upper bounds $\overline{\tau}_{x_{/Q}}$ and the transitions $\underset{{/ Q}}{\longrightarrow}$ are determined concurrently, via reachability analysis on $\hat{\reg}_{i,j}$. 
	
	\subsection{Lower Bounds on Inter-Event Times and State-Space Partitioning}
	First, we recall the notion of isochronous manifolds of ETC systems, which was firstly introduced in \cite{anta2012isochrony}:
	\begin{definition}[Isochronous Manifolds]
		Consider an ETC system \eqref{etc system}-\eqref{trig cond}. The set $M_{\tau_{\star}}=\{x\in\mathbb{R}^n : \tau(x)=\tau_{\star}\}$, where $\tau(x)$ is as in \eqref{interevent time}, is called isochronous manifold of time $\tau_{\star}$.
		\label{manifold definition}
	\end{definition}
	In other words, the isochronous manifold $M_{\tau_{\star}}$ consists of all points in the state-space of an ETC system that correspond to the same inter-event time $\tau_{\star}$. Isochronous manifolds are manifolds of dimension $n-1$ (proven in \cite{anta2012isochrony}).
	\begin{proposition}[\hspace{1sp}\cite{anta2012isochrony}]\label{star prop}
		Consider an ETC system \eqref{etc system}-\eqref{trig cond}, and let Assumption \ref{assumption} hold. Each homogeneous ray intersects any isochronous manifold only at one point:
		\begin{equation} \label{star manifolds}
		\forall \tau_{\star}>0 \text{ and }\forall x \in \mathbb{R}^n\setminus\{\boldsymbol{0}\}: \text{ } \exists! \lambda_x>0 \text{ such that } \lambda_x x \in M_{\tau_{\star}}
		\end{equation}
	\end{proposition}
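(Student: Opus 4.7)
The plan is to derive both existence and uniqueness of $\lambda_x$ directly from the Scaling Law \eqref{scaling}, exploiting the fact that $\alpha \geq 1$ by Assumption \ref{assumption}. The triggering function in \eqref{tabuada trig fun} is homogeneous of degree $\theta = 2$, so the hypotheses of the scaling theorem are met, and for any $x\in\mathbb{R}^n\setminus\{\boldsymbol{0}\}$ the inter-event time $\tau(x)$ is well-defined and lies in $(0,+\infty)$ by the property of $\phi$ recalled in the Remark after Assumption \ref{assumption}.

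First, I would fix $\tau_\star>0$ and an arbitrary $x\in\mathbb{R}^n\setminus\{\boldsymbol{0}\}$, and \emph{search} for $\lambda_x>0$ such that $\tau(\lambda_x x) = \tau_\star$, which is equivalent to $\lambda_x x \in M_{\tau_\star}$ by Definition \ref{manifold definition}. The Scaling Law gives $\tau(\lambda x) = \lambda^{-\alpha}\tau(x)$ for all $\lambda>0$, so the equation to solve becomes
\begin{equation*}
\lambda_x^{-\alpha}\,\tau(x) = \tau_\star.
\end{equation*}
Solving explicitly yields the candidate $\lambda_x = \bigl(\tau(x)/\tau_\star\bigr)^{1/\alpha}$, which is well-defined and strictly positive since both $\tau(x)$ and $\tau_\star$ are strictly positive and $\alpha\geq 1 > 0$. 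Plugging this back into the Scaling Law verifies $\tau(\lambda_x x)=\tau_\star$, establishing existence.

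For uniqueness, I would argue that the map $\lambda \mapsto \lambda^{-\alpha}\tau(x)$ is strictly decreasing on $(0,+\infty)$ because $\alpha>0$; hence the equation $\lambda^{-\alpha}\tau(x)=\tau_\star$ admits at most one positive solution. Combined with the existence argument above, this yields $\exists!\,\lambda_x>0$ with $\lambda_x x \in M_{\tau_\star}$.

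There is no real obstacle: the statement is essentially a one-line corollary of the Scaling Law together with the positivity of $\tau(x)$ guaranteed by the standing assumption on $\phi$. The only subtlety worth flagging in the write-up is to remark why $\tau(x)$ is finite and positive for every $x\neq \boldsymbol{0}$, namely because the triggering condition is strictly negative at $t=0$ and is eventually non-negative (the second bullet of the Remark after Assumption \ref{assumption}), so that the infimum in \eqref{interevent time} is attained at a strictly positive, finite time.
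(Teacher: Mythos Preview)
Your argument is correct and is the natural one: the Scaling Law $\tau(\lambda x)=\lambda^{-\alpha}\tau(x)$ immediately reduces the question to inverting the strictly monotone map $\lambda\mapsto\lambda^{-\alpha}\tau(x)$ on $(0,\infty)$, and you handle the finiteness and positivity of $\tau(x)$ cleanly via the standing assumptions on $\phi$.

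There is nothing to compare against, however: the paper does not supply its own proof of this proposition. It is stated with a citation to \cite{anta2012isochrony} and invoked as an imported result, so your write-up would in fact \emph{add} a self-contained justification that the paper omits.

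One small correction for the write-up: under the paper's convention (Definition~\ref{homogeneous function}, where degree $\theta$ means $f(\lambda x)=\lambda^{\theta+1}f(x)$), the triggering function \eqref{tabuada trig fun} satisfies $\phi(\lambda\zeta,\lambda\varepsilon)=\lambda^{2}\phi(\zeta,\varepsilon)$ and is therefore homogeneous of degree $\theta=1$, not $\theta=2$. This has no bearing on your proof, since only $\alpha$ enters the scaling formula, but you should fix the stated value of $\theta$ to match the paper's definition.
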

	\begin{proposition}[\hspace{1sp}\cite{delimpa}] \label{encircle prop}
		Consider an ETC system \eqref{etc system}-\eqref{trig cond}, and let Assumption \ref{assumption} hold. Consider isochronous manifolds $M_{\tau_i}$ and $M_{\tau_{i+1}}$, with $\tau_{i}<\tau_{i+1}$. For all $x \in M_{\tau_i}$:
		\begin{equation} \label{encirclement}
		\exists! \lambda_x\in(0,1) \text{ s.t. } \lambda_x x \in M_{\tau_{i+1}} \wedge \not\exists \kappa_x\geq 1 \text{ s.t. } \kappa_x x \in M_{\tau_{i+1}}.
		\end{equation}
	\end{proposition}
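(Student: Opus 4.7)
The plan is to combine Proposition \ref{star prop} with the scaling law \eqref{scaling} directly. By Proposition \ref{star prop} applied to the manifold $M_{\tau_{i+1}}$, for every $x \in M_{\tau_i} \subseteq \real^n\setminus\{\boldsymbol{0}\}$ there exists a unique $\lambda_x > 0$ such that $\lambda_x x \in M_{\tau_{i+1}}$. So existence and uniqueness of a positive scaling are already given; the whole content of the proposition lies in showing that this unique scalar is strictly less than $1$. This immediately takes care of the second assertion as well, since any $\kappa_x \geq 1$ with $\kappa_x x \in M_{\tau_{i+1}}$ would contradict the uniqueness of $\lambda_x$ once we have shown $\lambda_x < 1$.

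To pin down the size of $\lambda_x$, I would translate the membership conditions into equalities on the inter-event time map. From $x \in M_{\tau_i}$ and $\lambda_x x \in M_{\tau_{i+1}}$, Definition \ref{manifold definition} gives $\tau(x) = \tau_i$ and $\tau(\lambda_x x) = \tau_{i+1}$. Applying the scaling law \eqref{scaling} to the left-hand side of the latter yields
\begin{equation*}
\tau_{i+1} \;=\; \tau(\lambda_x x) \;=\; \lambda_x^{-\alpha}\, \tau(x) \;=\; \lambda_x^{-\alpha}\, \tau_i,
\end{equation*}
so $\lambda_x^{\alpha} = \tau_i/\tau_{i+1}$. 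Since $\tau_i < \tau_{i+1}$ the right-hand side is strictly less than $1$, and since $\alpha \geq 1 > 0$ by Assumption \ref{assumption}, taking $\alpha$-th roots preserves the inequality and gives $\lambda_x \in (0,1)$, which is the first conjunct.

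For the second conjunct, suppose for contradiction that some $\kappa_x \geq 1$ satisfied $\kappa_x x \in M_{\tau_{i+1}}$. Then both $\lambda_x$ and $\kappa_x$ would be positive scalars mapping $x$ onto $M_{\tau_{i+1}}$ along its homogeneous ray, contradicting the uniqueness part of Proposition \ref{star prop}. Hence no such $\kappa_x$ exists, completing the proof.

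I do not anticipate a serious obstacle: the argument is essentially a one-line consequence of the scaling law once Proposition \ref{star prop} is in hand. The only subtle point worth stating explicitly is the use of $\alpha > 0$ (guaranteed by Assumption \ref{assumption}) to ensure that $\lambda_x^\alpha < 1$ implies $\lambda_x < 1$; without a strictly positive degree of homogeneity, the monotonicity of $\lambda \mapsto \lambda^\alpha$ on $(0,\infty)$ could fail and the scaling argument would not give the desired strict inequality.
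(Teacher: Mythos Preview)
Your argument is correct: the combination of Proposition \ref{star prop} (uniqueness of the intersection point along each homogeneous ray) with the scaling law \eqref{scaling} and the assumption $\alpha\geq 1$ yields $\lambda_x = (\tau_i/\tau_{i+1})^{1/\alpha}\in(0,1)$, and the second conjunct then follows immediately from uniqueness. Note, however, that the paper does not supply its own proof of this proposition; it is quoted from \cite{delimpa}, so there is no in-paper argument to compare against. Your derivation is the natural one and is almost certainly what the cited reference does as well, since the statement is essentially a reformulation of the monotonicity of $\lambda\mapsto\tau(\lambda x)=\lambda^{-\alpha}\tau(x)$ for $\alpha>0$.
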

	Proposition \eqref{encircle prop} implies that isochronous manifolds that correspond to smaller inter-event times are further away from the origin in every direction. The two above propositions are depicted in Fig. \ref{two manifolds}.
	\begin{figure}[!h]
		\centering
		\includegraphics[width=2in]{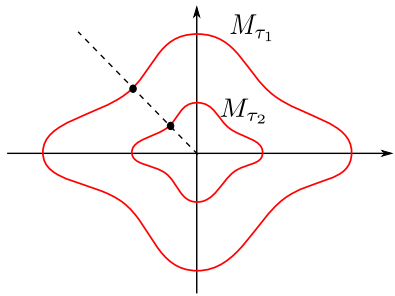}
		\caption{Isochronous manifolds $M_{\tau_1}$, $M_{\tau_2}$ ($\tau_1<\tau_2$). They are intersected by homogeneous rays only once. $M_{\tau_1}$ is further away from the origin in every direction compared to $M_{\tau_2}$.}
		\label{two manifolds}
	\end{figure} 
	Now, consider the region which is enclosed by two isochronous manifolds $M_{\tau_{i}}$ and $M_{\tau_{i+1}}$ with $\tau_i<\tau_{i+1}$. The scaling law \eqref{scaling} directly implies that $\tau_i$ lower bounds the inter-event times of all points in this region, i.e. \eqref{lower bounds} holds. Thus, if we could obtain these regions we would solve the problem of state-space partitioning. However, these exact regions cannot be derived analytically, since nonlinear systems generally do not obtain closed form solutions.
	
	In \cite{delimpa}, inner-approximations $\underline{M}_{\tau_{\star}}$ of isochronous manifolds $M_{\tau_{\star}}$, that satisfy \eqref{star manifolds}, \eqref{encirclement}, were derived analytically. It was shown that the regions enclosed by such approximations do satisfy \eqref{lower bounds}. Hence, we use them to partition the state-space and determine the abstraction's states. Let us recall the method presented in \cite{delimpa}. First, define the sets: 
	\begin{equation*}
	\begin{aligned}
	&\Omega_d:=\{x\in\mathbb{R}^{2n}: |x|<d\} ,\text{ }
	\setz := \{x\in \mathbb{R}^n: V(x) \leq c\} ,\\
	&\sete := \{e\in\mathbb{R}^n: e=x_0-x, \quad x_0,x\in \setz \} ,\text{ }
	\Xi := \setz \times \sete,
	\end{aligned}
	\end{equation*}
	where $d,c>0$ and $V(x)$ is a Lyapunov function for the ETC system \eqref{etc system}. The first step to obtain the inner-approximations is to solve the following feasibility problem:
	\begin{feasibility problem} \label{feasibility problem}
		Find coefficients $\delta_0,\delta_1,\dots,\delta_p\in\real^+_0$ such that:
		\begin{equation*}
			\begin{aligned}	
			&\mathcal{L}_F^p\phi(z,e) \le \sum_{i=0}^{p-1}\delta_i\mathcal{L}_F^i\phi(z,e) + \delta_p, \quad \forall \begin{bmatrix}z^{\top} &e^{\top}\end{bmatrix}\in \Omega_d,\\
			&\delta_0\phi(z,0)+\delta_p \geq \varepsilon > 0, \quad \forall z \in \setz,
			\end{aligned}
			\label{delta feasibility}
		\end{equation*}
		where $p>0$ is a user-defined positive integer, $\varepsilon$ is an arbitrary positive constant, and $c$, $d$ are such that $\Xi\subset \Omega_d$.
	\end{feasibility problem}
	In \cite{delimpa}, a computational algorithm has been developed that solves the above problem. Note that there always exists a solution; e.g. $\delta_p \geq \max\{\epsilon,\sup\limits_{z \in\Omega_d}\mathcal{L}_F^p\phi(z)\}$ and $\delta_i=0$ for $i=0, \dots, p-1$. Having obtained such $\delta_i\in\mathbb{R}_0^+$, the inner-approximations of isochronous manifolds are derived:
	\begin{theorem}[\hspace{1sp}\cite{delimpa}]\label{approx theorem}
		Consider an ETC system \eqref{etc system}-\eqref{trig cond}, a triggering function $\phi(\zeta(t;x), \varepsilon(t))$, and coefficients $\delta_0,\delta_1,\dots,\delta_p$ solving Problem \ref{feasibility problem}. Let Assumption \ref{assumption} hold. Let $D_{\rho}=\{x\in\mathbb{R}^n: |x|=\rho\}$, with $\rho>0$ such that $D_{\rho} \subset \setz$. Define the following function for all $x\in\mathbb{R}^n\setminus\{\boldsymbol{0}\}$:
		\begin{equation} \label{mu}
		\mu(x,t) := C (\tfrac{|x|}{\rho})^{\theta+1} \boldsymbol{e}^{A(\frac{|x|}{\rho})^{\alpha}t} 
		\begin{bmatrix}
		\phi(\rho\tfrac{x}{|x|},0)\\
		\max\bigg( \mathcal{L}_f\phi(\rho\tfrac{x}{|x|},0),0 \bigg)\\
		\vdots\\
		\max\bigg( \mathcal{L}_f^{p-1}\phi(\rho\tfrac{x}{|x|},0),0 \bigg)\\
		\delta_p
		\end{bmatrix},
		\end{equation}
		where:\begin{equation*}
		A = \begin{bmatrix}
		0 &1 &0 &\dots &0 &0\\
		0 &0 &1 &\dots &0 &0\\
		\vdots &\vdots &\qquad &\ddots &\vdots &\vdots\\
		0 &0 &0 &\dots &1 &0\\
		\delta_0 &\delta_1 &\delta_2 &\dots &\delta_{p-1} &1\\
		0 &0 &0 &\dots &0  &0\\
		\end{bmatrix},\text{ }
		C = \begin{bmatrix}
		1 \\0 \\\vdots \\0
		\end{bmatrix}^\top,
		\end{equation*}
		and $\alpha$ and $\theta$ are the degrees of homogeneity of the system and the triggering function, respectively. The set $\underline{M}_{\tau_{\star}}:=\{x\in\mathbb{R}^n:\mu(x,\tau_{\star})=0\}$
		satisfies \eqref{star manifolds} and \eqref{encirclement} and is an inner-approximation of the isochronous manifold $M_{\tau_{\star}}$.
	\end{theorem}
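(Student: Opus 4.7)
The plan is a comparison argument: I would show that $\mu(x,t)$ upper bounds $\phi(\zeta(t;x),\varepsilon(t))$ along every ETC trajectory, and then read off the three desired geometric properties from the scaling of $\mu$ in the $|x|$ direction.

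First, I would establish the bound on the reference sphere $D_\rho$. Pick $x_0\in D_\rho$ and define the auxiliary functions $y_i(t):=\mathcal{L}_F^i\phi(\zeta(t;x_0),\varepsilon(t))$ for $i=0,\dots,p-1$ and $y_p(t):=\delta_p$. The identity $\dot y_i=y_{i+1}$ for $i<p-1$ is just the definition of a Lie derivative, while the first inequality of Problem~\ref{feasibility problem} supplies $\dot y_{p-1}\le\sum_{i=0}^{p-1}\delta_i y_i+\delta_p$. Viewing this as the componentwise differential inequality $\dot y\le A y$, a standard linear comparison lemma gives $y_0(t)\le C e^{At}z(0)$, where $z(0)$ is the vector of initial values. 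Replacing each possibly negative $y_i(0)$ for $i\ge 1$ by its positive part only weakens the bound, producing precisely the initial vector inside \eqref{mu} (with the prefactor $(|x|/\rho)^{\theta+1}=1$). The trajectory stays in $\Omega_d$ up to the first triggering time thanks to the inclusion $\Xi\subset\Omega_d$ and the Lyapunov level set $\setz$, which legitimises the use of the Problem~\ref{feasibility problem} inequality here.

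Second, I would lift this sphere-level bound to arbitrary $x\neq\mathbf 0$ through homogeneity. With $\lambda=|x|/\rho$ and $x_0=\rho x/|x|\in D_\rho$, homogeneity of $F$ of degree $\alpha$ yields $\zeta(t;\lambda x_0)=\lambda\,\zeta(\lambda^\alpha t;x_0)$, and homogeneity of $\phi$ of degree $\theta$ yields $\phi(\lambda z,\lambda e)=\lambda^{\theta+1}\phi(z,e)$. Composing these with the Step~1 bound at time $\lambda^\alpha t$ gives
\begin{equation*}
\phi(\zeta(t;x),\varepsilon(t))=\lambda^{\theta+1}\phi\bigl(\zeta(\lambda^\alpha t;x_0),\varepsilon(\lambda^\alpha t)\bigr)\le\lambda^{\theta+1}C e^{A\lambda^\alpha t}z(0)=\mu(x,t),
\end{equation*}
which is exactly \eqref{mu}. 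At $t=0$ one has $\mu(x,0)=\lambda^{\theta+1}\phi(x_0,0)<0$ by the triggering-function assumption; the companion-form structure of $A$ with nonnegative $\delta_i$ and nonnegative initial entries then drives the first coordinate of $e^{At}z(0)$ monotonically through its first zero, so $\underline\tau(x):=\inf\{t>0:\mu(x,t)=0\}$ is well defined and unique on every ray. The inequality $\phi\le\mu$ immediately forces $\tau(x)\ge\underline\tau(x)$, since for every $t<\underline\tau(x)$ one has $\phi(\zeta(t;x),\varepsilon(t))\le\mu(x,t)<0$. Hence $\underline M_{\tau_\star}$ lies inside $M_{\tau_\star}$, which is the inner-approximation claim.

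Third, properties \eqref{star manifolds} and \eqref{encirclement} follow from a scaling identity. Direct substitution into \eqref{mu} gives $\mu(\lambda x,t)=\lambda^{\theta+1}\mu(x,\lambda^\alpha t)$ for every $\lambda>0$, so $\underline\tau(\lambda x)=\lambda^{-\alpha}\underline\tau(x)$---the same law that $\tau$ obeys. The argument used in \cite{anta2012isochrony} for Proposition~\ref{star prop} then applies verbatim: $\underline\tau$ is a strictly monotone bijection from each ray onto $\real^+$, so every ray meets $\underline M_{\tau_\star}$ at a single point, and for $\tau_i<\tau_{i+1}$ the corresponding scaling factor $\lambda$ is strictly smaller on the ray, giving \eqref{encirclement}. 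The main obstacle I anticipate is the comparison step itself: one has to handle rigorously both the negativity of some initial Lie derivatives (absorbed by the positive-part trick, which is why the maxima with $0$ appear in \eqref{mu}) and the need to keep the trajectory in $\Omega_d$ until the first trigger (handled by $\Xi\subset\Omega_d$ together with $\setz$). Once this is in place, everything else is bookkeeping with homogeneity.
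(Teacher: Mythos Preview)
This theorem is not proved in the present paper; it is quoted from \cite{delimpa} and only stated here, so there is no in-paper argument to compare your proposal against. Your comparison-plus-homogeneity outline (Kamke/M\"uller comparison on $D_\rho$ using the Metzler structure of $A$, then rescaling via $\zeta(t;\lambda x_0)=\lambda\zeta(\lambda^\alpha t;x_0)$ and $\phi(\lambda\cdot)=\lambda^{\theta+1}\phi(\cdot)$, then reading off \eqref{star manifolds}--\eqref{encirclement} from $\mu(\lambda x,t)=\lambda^{\theta+1}\mu(x,\lambda^\alpha t)$) is the natural route and matches the argument in \cite{delimpa}.

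One omission is worth flagging: you never invoke the \emph{second} inequality of Problem~\ref{feasibility problem}, $\delta_0\phi(z,0)+\delta_p\ge\varepsilon>0$. This is exactly what makes your ``drives the first coordinate monotonically through its first zero'' claim true. It forces $\delta_p>0$ and, more importantly, gives $\dot z_{p-1}\ge\varepsilon$ whenever $z_1,\dots,z_{p-1}\ge0$ (because $z_0$ is then nondecreasing, so $\delta_0 z_0(t)+\delta_p\ge\delta_0 z_0(0)+\delta_p\ge\varepsilon$); the chain $\dot z_i=z_{i+1}$ then keeps each $z_i$, $i\ge1$, nonnegative and eventually strictly positive, so $z_0$ is nondecreasing, eventually strictly increasing, and crosses zero exactly once. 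The companion structure with $\delta_i\ge0$ alone does not rule out $z_0$ remaining negative forever. Once you invoke this condition, the rest of your sketch is sound.
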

	Regarding the regions enclosed by inner-approximations of isochronous manifolds, we get that \eqref{lower bounds} is satisfied:
	\begin{proposition}[\hspace{1sp}\cite{delimpa}]
		Consider the following set:
		\begin{equation}\label{regions}
			\reg_{i} := \{x\in\mathbb{R}^n: \mu(x,\underline{\tau}_{i+1})>0 , \text{ } \mu(x,\underline{\tau}_{i})\leq0\},
		\end{equation}
		with $0<\underline{\tau}_{i}<\underline{\tau}_{i+1}$. The set $\reg_i$ satisfies \eqref{lower bounds}, i.e. $ \tau(x) \geq \underline{\tau}_i,$ for all $x \in \reg_i$.
	\end{proposition}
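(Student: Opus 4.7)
My plan is to show that for $x\in\reg_i$, the condition $\mu(x,\underline{\tau}_i)\le 0$ forces $x$ to lie on the origin-side of the true isochronous manifold $M_{\underline{\tau}_i}$ along the homogeneous ray through $x$; the scaling law~\eqref{scaling} then yields $\tau(x)\ge\underline{\tau}_i$ directly. Note that the second defining condition of $\reg_i$, namely $\mu(x,\underline{\tau}_{i+1})>0$, is not needed for this lower bound — it only prevents $\reg_i$ from overlapping with $\reg_{i+1}$.

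I would first fix $x\in\reg_i\setminus\{\boldsymbol{0}\}$ and set $v:=x/|x|$. By Proposition~\ref{star prop} applied to $\underline{M}_{\underline{\tau}_i}$ (which is legitimate because Theorem~\ref{approx theorem} certifies that $\underline{M}_{\underline{\tau}_i}$ satisfies~\eqref{star manifolds}), there is a unique $\underline{\lambda}>0$ with $\underline{\lambda}v\in\underline{M}_{\underline{\tau}_i}$, i.e.\ $\mu(\underline{\lambda}v,\underline{\tau}_i)=0$. To decide the sign of $\mu(\lambda v,\underline{\tau}_i)$ for $\lambda\neq\underline{\lambda}$, I would inspect~\eqref{mu}: as $\lambda\to 0^+$ the prefactor $(\lambda/\rho)^{\theta+1}$ vanishes and the matrix exponential tends to the identity, so $\mu(\lambda v,\underline{\tau}_i)\sim(\lambda/\rho)^{\theta+1}\phi(\rho v,0)=-\sigma^2\rho^2(\lambda/\rho)^{\theta+1}<0$ by~\eqref{tabuada trig fun}; and as $\lambda\to\infty$ the strictly positive entry $\delta_p$ fed through $e^{A(\lambda/\rho)^{\alpha}\underline{\tau}_i}$ drives $\mu(\lambda v,\underline{\tau}_i)\to+\infty$. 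Continuity together with the uniqueness of the zero on the ray then gives the dichotomy
\[
\mu(\lambda v,\underline{\tau}_i)\le 0 \iff \lambda\le\underline{\lambda},
\]
which, applied at $\lambda=|x|$, yields $|x|\le\underline{\lambda}$.

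I would then invoke the inner-approximation part of Theorem~\ref{approx theorem}: on the ray through $v$, the intersection $\lambda^{\star}v$ with the exact isochronous manifold $M_{\underline{\tau}_i}$ (unique by Proposition~\ref{star prop}) satisfies $\underline{\lambda}\le\lambda^{\star}$. Combining with the previous step gives $|x|\le\lambda^{\star}$, so I can write $x=\kappa(\lambda^{\star}v)$ with $\kappa=|x|/\lambda^{\star}\in(0,1]$. Since $\lambda^{\star}v\in M_{\underline{\tau}_i}$ means $\tau(\lambda^{\star}v)=\underline{\tau}_i$ by Definition~\ref{manifold definition}, the scaling law~\eqref{scaling} together with $\alpha\ge 1$ (Assumption~\ref{assumption}) yields
\[
\tau(x)=\kappa^{-\alpha}\,\tau(\lambda^{\star}v)=\kappa^{-\alpha}\underline{\tau}_i\ge\underline{\tau}_i,
\]
which is the claim.

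The main obstacle I anticipate is the sign characterisation of $\mu$ along a ray: a naive argument would want monotonicity of $\lambda\mapsto\mu(\lambda v,\underline{\tau}_i)$, but that is not transparent from~\eqref{mu}. The route above sidesteps monotonicity by combining the uniqueness of the zero (from the star property) with the two limiting behaviours at $\lambda\to 0^+$ and $\lambda\to\infty$. Everything else reduces to a direct application of the scaling law and of results already established in Theorem~\ref{approx theorem} and Proposition~\ref{star prop}.
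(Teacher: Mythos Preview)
The paper does not supply a proof here; the proposition is imported from \cite{delimpa}, so there is nothing in this text to compare against. Your argument is sound and uses exactly the ingredients the surrounding discussion highlights --- the star property~\eqref{star manifolds} of $\underline{M}_{\underline{\tau}_i}$, its being an inner-approximation of $M_{\underline{\tau}_i}$ along each ray, and the scaling law~\eqref{scaling}.

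One point deserves more care. For the dichotomy $\mu(\lambda v,\underline{\tau}_i)\le 0\iff\lambda\le\underline{\lambda}$ you need $\mu(\lambda v,\underline{\tau}_i)>0$ for $\lambda>\underline{\lambda}$, and you justify this via $\mu(\lambda v,\underline{\tau}_i)\to+\infty$ as $\lambda\to\infty$. That limit is true but not transparent from~\eqref{mu} alone, since it hinges on how the strictly positive entry $\delta_p$ propagates through $e^{As}$ into the first component as $s\to\infty$. A cleaner route is to use the scaling identity $\mu(\lambda v,\underline{\tau}_i)=\lambda^{\theta+1}\mu(v,\lambda^{\alpha}\underline{\tau}_i)$ noted later in Section~\ref{overapprox section}, together with the fact --- built into the construction in \cite{delimpa} and equivalent to $\underline{M}_{\tau_\star}$ existing for every $\tau_\star>0$ --- that $t\mapsto\mu(v,t)$ eventually becomes positive. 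Either way your conclusion stands. As a minor aside, $\kappa^{-\alpha}\ge 1$ for $\kappa\in(0,1]$ only requires $\alpha>0$, so invoking the full strength of Assumption~\ref{assumption} ($\alpha\ge 1$) is unnecessary at that step.
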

	The sets $\reg_i$ are the regions with their outer and inner boundaries being $\underline{M}_{\underline{\tau}_{i}}$ and $\underline{M}_{\underline{\tau}_{{i+1}}}$ respectively (see Fig. \ref{discr approx}).
	\begin{figure}[!h]
		\centering
		\includegraphics[width=2.5in]{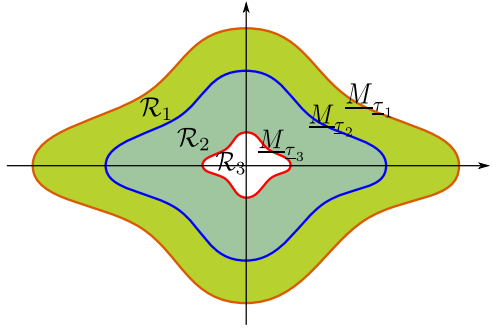}
		\caption{Inner-approximations $\underline{M}_{\underline{\tau}_{i}}$ of isochronous manifolds and the regions enclosed by them $\reg_i$.}
		\label{discr approx}
	\end{figure}
	Since $\reg_i$ satisfy \eqref{lower bounds}, we could directly use them to partition the state-space as required. However, these sets are generally large, which could harm the accuracy of the reachability analysis that is to be conducted afterwards. Thus, we further divide them, using conic intersections. We create a state-space covering by $m$ cones (see \cite{fiter2012state}), which admit the representation:
	\begin{equation}
	\cone_j := \{x\in\real^n: E_jx\succeq 0\}, \quad j=1,\dots,m.
	\end{equation}
	The sets $\{\reg_{1,1},\dots,\reg_{q,m}\}$ are obtained as intersections of regions $\reg_i$ with cones $\mathcal{C}_j$: We fix a set of $q$ times $\{\underline{\tau}_{1},\dots,\underline{\tau}_{q}\}$ that serve as lower bounds on inter-event times, and obtain the regions $\{\reg_1,\dots,\reg_q\}$ from \eqref{regions}. Then, employing a covering by $m$ cones, we derive the sets $\reg_{i,j}$ as
	\begin{equation}
		\reg_{i,j} := \reg_{i}\cap\mathcal{C}_{j},
	\end{equation}
	where $i=1,\dots,q$ and $j=1,\dots,m$ (see Fig. \ref{discr cones}). Since $\reg_{i,j}\subset\reg_{i}$, from \eqref{lower bounds} we get that $\tau(x)\geq\underline{\tau}_i$ for all $x\in\reg_{i,j}$. Thus, we can fix:
	\begin{equation}
		\underline{\tau}_{\reg_{i,j}} := \underline{\tau}_{i}.
	\end{equation}
	\begin{figure}[!h]
		\centering
		\includegraphics[width=2.5in]{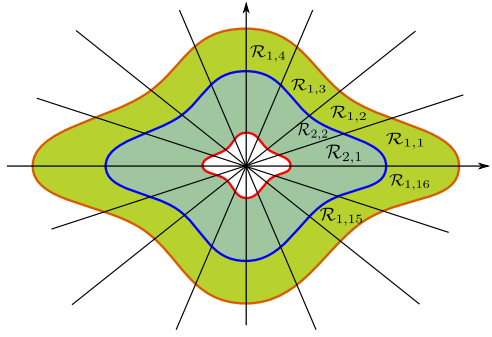}
		\caption{State-space partitioning into regions $\reg_{i,j}$.}
		\label{discr cones}
	\end{figure}
	It is straightforward to design the equivalence relation $Q$ as: $Q := \{(x, x_{/Q})\in \real^n\times\real^n:x\in x_{/Q}, \text{ } x_{/Q} \in \{\reg_{1,1},\dots,\reg_{q,m}\}\}$.
	\begin{remark}
		To define the regions $\reg_{q,j}$ (e.g. see the white innermost regions in Fig. \ref{discr cones}), we cannot use \eqref{regions}, since there is no $\underline{\tau}_{q+1}$. Thus, we define $\reg_{q} := \{x\in\mathbb{R}^n: \mu(x,\underline{\tau}_{q})\leq0\}$ and $\reg_{q,j}:=\reg_{q}\cap\cone_j$. Observe that \eqref{lower bounds} still holds.
	\end{remark}
	
	\subsection{Overapproximations of the sets $\reg_{i,j}$} \label{overapprox section}
	 To obtain the upper bounds $\overline{\tau}_{x_{/Q}}$ and the state transitions, reachability analysis on the regions $\reg_{i,j}$ is conducted. However, it is obvious from \eqref{mu} and \eqref{regions} that the sets $\reg_{i,j}$ are transcendental, which renders their computational handling very difficult. To the authors' knowledge, there are no reachability analysis tools that can handle effectively such sets. Hence, we have to overapproximate them. 
	 
	 In general, the overapproximation of transcendental sets is very challenging. However, leveraging special characteristics of the specific representation we devised an algorithm that overapproximates the sets $\reg_{i,j}$ by ball segments (Fig. \ref{ball segment}): 
	 \begin{equation}\label{ball seg eq}
	 \hat{\reg}_{i,j} :=\{x\in\mathcal{C}_{j}: \underline{r}_{i+1,j}\leq|x|\leq\overline{r}_{i,j}\}.
	 \end{equation}
	 Note that, since  $\reg_{i,j}\subseteq\hat{\reg}_{i,j}$, the following holds:
	 \begin{equation}\label{reg approx reach}
	 \reach^f_{[\underline{\tau}_{\reg_{i,j}},\overline{\tau}_{\reg_{i,j}}]}(\reg_{i,j})\subseteq\reach^f_{[\underline{\tau}_{\reg_{i,j}},\overline{\tau}_{\reg_{i,j}}]}(\hat{\reg}_{i,j}).
	 \end{equation}
	 \begin{figure}[!h]
	 	\centering
	 	\includegraphics[width=2in]{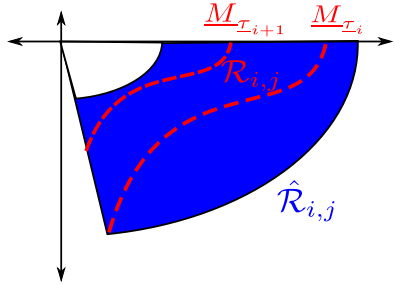}
	 	\caption{Ball segment $\hat{\reg}_{i,j}$ (blue region) overapproximating $\reg_{i,j}$ (region delimited by the red lines and the cone).}
	 	\label{ball segment}
	 \end{figure}
	 To obtain the ball segments \eqref{ball seg eq}, $\underline{r}_{i+1,j}$ and $\overline{r}_{i,j}$ must be determined; i.e. spherical segments (intersections of spheres with cones) that inner- and outer- approximate the conic sections $M_{\underline{\tau}_{i+1}}\cap\cone_j$ and $M_{\underline{\tau}_i}\cap\cone_j$, respectively, have to be found (see Fig. \ref{ball segment}). A whole sphere $\sphere_{r} :=\{x\in\real^n: |x|=r\}$ inner-approximates the whole $\underline{M}_{\underline{\tau}_{\star}}$ if it lies entirely in the region enclosed by $\underline{M}_{\underline{\tau}_{\star}}$, that is if $\mu(x,\underline{\tau}_{\star})\leq0$ for all $x\in\sphere_r$. 
	 Likewise, a spherical segment $\sphere_{\underline{r}_{\star,j}}\cap\cone_j$ inner-approximates $\underline{M}_{\underline{\tau}_{\star}}\cap\cone_j$ if the following holds: 
	 \begin{equation} \label{inner sphere}
		 \forall x\in \sphere_{\underline{r}_{\star,j}}\cap\cone_j: \quad \mu(x,\underline{\tau}_{\star})\leq0.
	 \end{equation}
	 
	 Formulas like \eqref{inner sphere} can be verified or disproved by SMT solvers, e.g. dReal \cite{dreal}. Thus, a bisection algorithm on $\underline{r}_{\star,j}$ could be employed, by iteratively checking \eqref{inner sphere}. In our case though, $\mu(x,\underline{\tau}_{\star})\leq0$ implies the numerically non-robust symbolic computation of the matrix exponential $\boldsymbol{e}^{A(\frac{|x|}{\rho})^{\alpha}\underline{\tau}_{\star}}$ over the symbolic variable $x$. Luckily, since we want to verify $\mu(x,\underline{\tau}_{\star})\leq0$ on a spherical segment $\sphere_{\underline{r}_{\star,j}}\cap\cone_j$, we can fix $|x|\gets \underline{r}_{\star,j}$, which renders the symbolic matrix exponential a regular numerical one and severely relaxes computations, i.e. $\boldsymbol{e}^{A(\frac{|x|}{\rho})^{\alpha}\underline{\tau}_{\star}} = \boldsymbol{e}^{A(\frac{\underline{r}_{\star,j}}{\rho})^{\alpha}\underline{\tau}_{\star}}$ for all $x \in \sphere_{\underline{r}_{\star,j}}\cap\cone_j$. 
	 This is done by fixing the first argument of $\mu(\cdot,\underline{\tau}_{\star})$ as: $\mu(\frac{x}{|x|}\underline{r}_{\star,j},\underline{\tau}_{\star})$. Consequently, in order to find a spherical inner-approximation $\sphere_{\underline{r}_{\star,j}}\cap\cone_j$ of the conic section $\underline{M}_{\underline{\tau}_{\star}}\cap\cone_j$, we employ a bisection algorithm on the radius $\underline{r}_{\star,j}$ and check iteratively by an SMT solver the following condition: 
	 \begin{equation}\label{smt check inner approx}
	 \forall x \in \sphere_{\underline{r}_{\star,j}}\cap\cone_j: \quad \mu(\frac{x}{|x|}\underline{r}_{\star,j},\underline{\tau}_{\star}) \leq 0.
	 \end{equation}
	 By reversing inequality \eqref{smt check inner approx} we determine an outer-approximation $\sphere_{\overline{r}_{\star,j}}\cap\cone_j$ of $\underline{M}_{\underline{\tau}_{\star}}\cap\cone_j$. Fig. \ref{balls approx man} shows spherical inner/outer-approximations of $\underline{M}_{\underline{\tau}_{\star}}$ for each conic section.
	 \begin{figure}[h]
	 	\centering
	 	\includegraphics[width=1.8in]{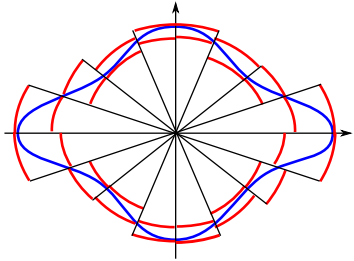}
	 	\caption{Spherical inner- and outer-approximations for each conic section of $\underline{M}_{\underline{\tau}_{\star}}$.}
	 	\label{balls approx man}
	 \end{figure}
	 
	 Until now, we have only done this for one $\underline{M}_{\underline{\tau}_{\star}}$. To derive such approximations for all $\underline{M}_{\underline{\tau}_{i}}$, first observe from \eqref{mu} that $\mu(\lambda x, t) = \lambda^{\theta+1}\mu(x, \lambda^{\alpha}t)$. This implies that
	 $x \in \underline{M}_{\underline{\tau}_{\star}} \implies \lambda x \in \underline{M}_{\lambda^{-\alpha}\underline{\tau}_{\star}}$,
	 i.e. in every direction of the state-space the sets $\underline{M}_{\underline{\tau}_{\star}}$ scale according to their corresponding time $\underline{\tau}_{\star}$. Thus, if $\sphere_{\underline{r}_{\star,j}}\cap\cone_j$ is an inner-approximation of $\underline{M}_{\underline{\tau}_{\star}}\cap\cone_j$, then $\sphere_{\lambda \underline{r}_{\star}}\cap\cone_j$ is an inner-approximation of $\underline{M}_{\lambda^{-\alpha}\underline{\tau}_{\star}}\cap\cone_j$. Consequently, to obtain inner- and outer-approximations of all conic sections $\underline{M}_{\underline{\tau}_{i}}\cap\cone_j$ ($i=1,\dots,q$), we scale the obtained radii accordingly by corresponding factors $\lambda_i = \Big(\frac{\underline{\tau}_{\star}}{\underline{\tau}_{i}}\Big)^{\frac{1}{\alpha}}$,
	 so that we get:
	 $\underline{r}_{i,j}=\lambda_i\underline{r}_{\star,j}$ and $\overline{r}_{i,j}=\lambda_i\overline{r}_{\star,j}$.
	 Finally, as soon as all radii $\underline{r}_{i,j}$, $\overline{r}_{i,j}$ are obtained, the regions $\reg_{i,j}$ are overapproximated by ball segments $\hat{\reg}_{i,j}$ \eqref{ball seg eq}.
	
	\begin{remark}
		For sets $\reg_{q,j}$, which contain the origin, there is no radius $\underline{r}_{q+1,j}$ that defines $\hat{\reg}_{q,j}$ as in \eqref{ball seg eq}, since there is no $\underline{M}_{\underline{\tau}_{q+1}}$. For these sets, we define the overapproximations as:
			$\hat{\reg}_{q,j} = \{x\in\mathcal{C}_{j}: |x|\leq\overline{r}_{q,j}\}.$
	\end{remark}
	\subsection{Upper Bounds on Inter-Event Times and State Transitions}
	
	To complete the construction, what remains is to obtain the upper bounds $\overline{\tau}_{\reg_{i,j}}$ and the state transitions. Let us recall the definition of transitions for the power quotient system: 
	\begin{align*}
	(x_{/Q},x'_{/Q})&\in\underset{{/ Q}}{\longrightarrow} \text{ if:}\\ \exists x \in x_{/Q} \text{ and } \exists x' \in x'_{/Q}&\text{ such that }\zeta(H(x);x)=x'.
	\end{align*}
	To determine such transitions, the inter-event time $\tau(x)=H(x)$ has to be known a priori and the trajectories starting from all $x\in x_{/Q}$ have to be computed,  which is impossible. Thus, we choose to relax the definition of transitions as:
	\begin{align*}
	(x_{/Q},x'_{/Q})&\in\underset{{/ Q}}{\longrightarrow} \text{ if:}\\ \exists x \in x_{/Q} \text{ and } \exists x' \in x'_{/Q}&\text{ such that }x' \in \reach^f_{H_{/Q}(x_{/Q})}(x),
	\end{align*}
	i.e. if there exists a state $x'$ in the region $x'_{/Q}$ that is reachable from at least one state $x \in x_{/Q}$ in the interval $H_{/ Q}(x_{/Q}) := [\underline{\tau}_{x_{/Q}},\overline{\tau}_{x_{/Q}}]$. This characterization of transitions involves conducting reachability analysis and computing set-intersections on the transcendental sets $\reg_{i,j}$. Thus, we relax the characterization once more, motivated by \eqref{reg approx reach}, using the overapproximations $\hat{\reg}_{i,j}$:
	\begin{equation}\label{transitions}
	(\reg_{i,j},\reg_{a,b})\in\underset{{/ Q}}{\longrightarrow} \text{ if:}\quad \reach^f_{[\underline{\tau}_{\reg_{i,j}},\overline{\tau}_{\reg_{i,j}}]}(\hat{\reg}_{i,j}) \cap \hat{\reg}_{a,b} \neq \emptyset .
	\end{equation}
	Hence, if the upper bounds $\overline{\tau}_{\reg_{i,j}}$ are known, we can determine transitions from each $\reg_{i,j}$ by the above formula. In this subsection, we show how to determine the upper bounds and the transitions \textit{concurrently}, via reachability analysis.
	 
	For a certain region $\hat{\reg}_{i,j}$ we fix the reachability analysis time interval as $[\underline{\tau}_{\reg_{i,j}},\tau_{\max}]$, where $\tau_{\max}$ is user-specified. Reachability analysis is conducted on the extended state ETC system \eqref{extended etc system}. The following two sets are defined:
	\begin{align}
	&\mathcal{I}_F := \{(x,\boldsymbol{0})\in\real^{2n}:\text{ }x\in\hat{\reg}_{i,j}\},\label{initial set}\\
	&\mathcal{U}_F := \{(x,e)\in\real^{2n}: \phi(x,e)\leq 0\},\label{unsafe set}
	\end{align}
	where $\phi$ is the triggering function. The set $\mathcal{I}_F$ serves as the initial set, while $\mathcal{U}_F$ serves as the unsafe set. By checking if the system's trajectories are in $\mathcal{U}_F$ at time $\tau_{\max}$, we check if $\tau_{\max}$ is an upper bound on inter-event times:
	\begin{proposition}
		If $\reach_{\tau_{\max}}^F(\mathcal{I}_F)\cap\mathcal{U}_F=\emptyset$, where $F(\cdot)$ is defined in \eqref{extended etc system} and $\mathcal{I}_F$ and $\mathcal{U}_F$ are defined in \eqref{initial set}-\eqref{unsafe set}, then:
		\begin{equation} \label{upper bounds}
			\forall x \in \reg_{i,j}: \quad \tau(x)\leq\tau_{\max}.
		\end{equation}
	\end{proposition}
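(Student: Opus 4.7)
The plan is a direct pointwise argument based on the intermediate value theorem applied to $\phi$ along each individual trajectory. Fix any $x \in \reg_{i,j}$. By the construction of $\hat{\reg}_{i,j}$ as an overapproximation of $\reg_{i,j}$ one has $\reg_{i,j} \subseteq \hat{\reg}_{i,j}$, and therefore $(x, \boldsymbol{0}) \in \mathcal{I}_F$. Consequently, the trajectory $\xi(t) = (\zeta(t;x), \varepsilon(t))$ of the continuous flow $\dot{\xi} = F(\xi)$ from $\xi(0) = (x,\boldsymbol{0})$ satisfies $\xi(\tau_{\max}) \in \reach_{\tau_{\max}}^F(\mathcal{I}_F)$. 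The hypothesis $\reach_{\tau_{\max}}^F(\mathcal{I}_F) \cap \mathcal{U}_F = \emptyset$, combined with the definition of $\mathcal{U}_F$ in \eqref{unsafe set}, then forces $\phi(\zeta(\tau_{\max};x), \varepsilon(\tau_{\max})) > 0$.

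At the other endpoint, $\varepsilon(0) = \boldsymbol{0}$, so the triggering function \eqref{tabuada trig fun} gives $\phi(x, \boldsymbol{0}) = -\sigma^2 |x|^2 < 0$ for $x \neq \boldsymbol{0}$ (the same sign property is explicitly assumed in the generalisation of the remark in Section 2.5). Since $t \mapsto \phi(\zeta(t;x), \varepsilon(t))$ is continuous on $[0, \tau_{\max}]$, the intermediate value theorem supplies some $t^\star \in (0, \tau_{\max})$ at which $\phi$ vanishes. The infimum characterisation \eqref{interevent time} then yields $\tau(x) \leq t^\star < \tau_{\max}$, and in particular the desired inequality $\tau(x) \leq \tau_{\max}$. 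Since $x \in \reg_{i,j}$ was arbitrary, \eqref{upper bounds} follows.

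The argument is essentially a single application of IVT, so I do not anticipate a substantive technical obstacle. The only point worth emphasising is the interpretation of $\reach^F$: on the interval $[0, \tau(x))$ the measurement-error reset in \eqref{extended etc system} is inactive, so the reachable set computed under the pure continuous flow of $F$ indeed captures the pre-triggering evolution of $\xi$. This is precisely what makes the overapproximation-based emptiness check a legitimate certificate of the upper-bound property. A minor cosmetic remark: the argument actually produces the strict inequality $\tau(x) < \tau_{\max}$, which of course implies the stated non-strict version.
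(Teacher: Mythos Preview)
Your argument is correct and mirrors the paper's own proof, which simply notes that the emptiness hypothesis forces $\phi(\zeta(\tau_{\max};x),\varepsilon(\tau_{\max}))>0$ for all $x\in\hat{\reg}_{i,j}$ and hence $\tau(x)\le\tau_{\max}$, then invokes $\reg_{i,j}\subseteq\hat{\reg}_{i,j}$. Your explicit use of the intermediate value theorem is a harmless elaboration (in fact, since $\tau(x)=\inf\{t>0:\phi\ge 0\}$ and $\tau_{\max}$ itself lies in that set, the conclusion $\tau(x)\le\tau_{\max}$ follows directly from the infimum definition without IVT), and your remark about the reset being inactive is a useful clarification the paper leaves implicit.
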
 
	\begin{proof}
		If $\reach_{\tau_{\max}}^F(\mathcal{I}_F)\cap\mathcal{U}_F=\emptyset$, then for all $x\in\hat{\reg}_{i,j}$: $\phi(\zeta(\tau_{\max};x),\varepsilon(\tau_{\max})) > 0$. Thus, $\tau(x)\leq\tau_{\max}$ for all $x\in\hat{\reg}_{i,j}$, and since $\reg_{i,j}\subseteq\hat{\reg}_{i,j}$, we get \eqref{upper bounds}.
	\end{proof} Obviously, in order to find an upper-bound on inter-event times for a region $\reg_{i,j}$, a line search on $\tau_{max}$ is needed. As soon as the upper-bounds $\overline{\tau}_{\reg_{i,j}}$ are obtained, to determine the transitions from each $\reg_{i,j}$ we employ equations \eqref{projection reach}, \eqref{transitions} and the computed flowpipes $\reach^F_{[\underline{\tau}_{\reg_{i,j}}, \overline{\tau}_{\reg_{i,j}}]}(\mathcal{I}_F)$:
	\begin{equation}
	 \reach^f_{[\underline{\tau}_{\reg_{i,j}},\overline{\tau}_{\reg_{i,j}}]}(\hat{\reg}_{i,j}) \cap \hat{\reg}_{a,b} = \boldsymbol{\pi}_{\zeta}\reach^F_{[\underline{\tau}_{\reg_{i,j}}, \overline{\tau}_{\reg_{i,j}}]}(\mathcal{I}_F)\cap \hat{\reg}_{a,b}.
	\end{equation}
	
	\begin{remark}
		Instead of computing timing upper bounds for all $\reg_{i,j}$, one could compute them only for regions $\reg_{i,j}$ for a fixed $i$ and for all $j$, and then use the scaling law \eqref{scaling} to determine the upper bounds for all $\reg_{i,j}$. Also, flows of homogeneous systems scale as: $\zeta(t;\lambda x) = \lambda\zeta(\lambda^{\alpha}t;x)$, which could similarly be employed to determine transitions for all $\reg_{i,j}$, based on transitions of regions $\reg_{i,j}$ for a fixed $i$.
	\end{remark}
	\begin{remark}\label{upper bounds region 1}
		For regions $\reg_{q,j}$, which contain the origin, there is no upper bound on inter-event times, since the origin's inter-event time is theoretically $\infty$. For these regions, we arbitrarily dictate an upper bound $\overline{\tau}_{\reg_{q,j}}\geq\underline{\tau}_{\reg_{q,j}}$ and force the sensors to close the sampling loop whenever the system's last measured state $x_{k-1}\in\reg_{q,j}$ and $t=t_{k-1}+\overline{\tau}_{\reg_{q,j}}$.
	\end{remark}

	Finally, for the constructed abstraction we have:
	\begin{proposition} \label{precision of abstraction}
		The constructed metric system $\sys_{/ Q}$ $\epsilon$-approximately simulates \eqref{infinite sys}, with $\epsilon \leq \max\limits_{i,j}\{\overline{\tau}_{\reg_{i,j}}-\underline{\tau}_{\reg_{i,j}}\}$.
	\end{proposition}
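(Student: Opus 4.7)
The plan is to invoke Lemma \ref{precision lemma}, which reduces the claim to upper-bounding $\sup_{x\in x_{/Q},\,x_{/Q}\in X_{/Q}} d_H(H(x),H_{/Q}(x_{/Q}))$ by $\max_{i,j}\{\overline{\tau}_{\reg_{i,j}}-\underline{\tau}_{\reg_{i,j}}\}$. Fix an arbitrary region $\reg_{i,j}$ and an arbitrary $x\in\reg_{i,j}$. Note that $H(x)=\tau(x)$ is a singleton in $\real^+$ while $H_{/Q}(\reg_{i,j})=[\underline{\tau}_{\reg_{i,j}},\overline{\tau}_{\reg_{i,j}}]$ is an interval, so the Hausdorff distance computation splits into the two familiar suprema from its definition.

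First, I would verify the containment $\tau(x)\in[\underline{\tau}_{\reg_{i,j}},\overline{\tau}_{\reg_{i,j}}]$ for every $x\in\reg_{i,j}$. The lower inclusion $\underline{\tau}_{\reg_{i,j}}\leq\tau(x)$ is immediate from \eqref{lower bounds} together with $\reg_{i,j}\subseteq\reg_i$ and the assignment $\underline{\tau}_{\reg_{i,j}}:=\underline{\tau}_i$. The upper inclusion $\tau(x)\leq\overline{\tau}_{\reg_{i,j}}$ follows from the preceding proposition (equation \eqref{upper bounds}): by construction $\overline{\tau}_{\reg_{i,j}}$ is accepted only if the reachability check on $\hat{\reg}_{i,j}\supseteq\reg_{i,j}$ confirms that no trajectory starting in $\reg_{i,j}$ can have triggered by time $\overline{\tau}_{\reg_{i,j}}$. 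The containment makes $\inf_{z\in H_{/Q}(\reg_{i,j})} |z-\tau(x)|=0$, eliminating the first of the two suprema in $d_H$.

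Second, I would bound the remaining supremum. Since $\{\tau(x)\}$ lies inside the interval,
\begin{equation*}
\sup_{z\in[\underline{\tau}_{\reg_{i,j}},\overline{\tau}_{\reg_{i,j}}]} |z-\tau(x)| = \max\bigl(\tau(x)-\underline{\tau}_{\reg_{i,j}},\,\overline{\tau}_{\reg_{i,j}}-\tau(x)\bigr) \leq \overline{\tau}_{\reg_{i,j}}-\underline{\tau}_{\reg_{i,j}}.
\end{equation*}
Hence $d_H(H(x),H_{/Q}(\reg_{i,j}))\leq\overline{\tau}_{\reg_{i,j}}-\underline{\tau}_{\reg_{i,j}}$ pointwise. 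Taking the supremum over $x\in\reg_{i,j}$ preserves this bound, and the final maximum over all pairs $(i,j)$ produces the advertised $\epsilon$. Plugging into Lemma \ref{precision lemma} then delivers $\sys\overset{\epsilon}{\preceq}\sys_{/Q}$.

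There is no real obstacle here: the argument is essentially bookkeeping around the definition of $d_H$, with one subtlety being the implicit identification of the scalar output $H(x)$ with the singleton $\{H(x)\}$ in order to apply $d_H$. The genuinely non-trivial content of the result has already been absorbed into the construction itself, namely that the isochronous-manifold-based partitioning guarantees the lower bound and that the extended-system reachability check guarantees the upper bound; once both are in place, the precision estimate is immediate.
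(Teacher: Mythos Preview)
Your approach is correct and identical to the paper's, which simply cites Lemma \ref{precision lemma} as a one-line proof; you have merely (and correctly) unpacked the Hausdorff-distance estimate that the lemma requires. One small wording slip: the reachability check confirms that \emph{every} trajectory has already triggered by time $\overline{\tau}_{\reg_{i,j}}$ (not that none has), which is precisely what gives $\tau(x)\le\overline{\tau}_{\reg_{i,j}}$ as in \eqref{upper bounds}.
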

	\begin{proof}
			It is a direct result of Lemma \ref{precision lemma}.
	\end{proof}
	Thus, the constructed abstraction can be used for scheduling, as described in \cite{dieky2015symbolic}.
\section{Numerical Example}
	Consider the homogeneous of degree $2$ nonlinear system:
	\begin{equation}
		\dot{\zeta}_1 = \zeta_1^3+\zeta_1\zeta_2^2, \quad
		\dot{\zeta}_2 = \zeta_1\zeta_2^2 -\zeta_1^2\zeta_2 + \upsilon
	\label{example1}
	\end{equation}
	with $\upsilon(\zeta)=-\zeta_2^3-\zeta_1\zeta_2^2$. A triggering function, used in \cite{anta2012isochrony}, rendering the ETC implementation asymptotically stable is: 
	\begin{equation*}
		\phi(\zeta(t;x), \varepsilon(t)) = |\varepsilon(t)|^2-0.0127^2\cdot 0.3^2|\zeta(t;x)|^2,
	\end{equation*}	
	
	For the abstraction, we fix $\{\underline{\tau}_1, \underline{\tau}_2, \underline{\tau}_3\} = \{4, 8, 20\}\cdot10^{-4}$, which serve as timing lower bounds $\underline{\tau}_{\reg_{i,j}}$ of the abstraction's regions, and the number of cones $m=16$. The abstraction is composed of 48 regions $\reg_{i,j}$. Fig. \ref{example regions} depicts the state-space partitioning into regions $\reg_{i,j}$ created by the cones $\cone_j$ (black rays) and the approximations of isochronous manifolds $\underline{M}_{\underline{\tau}_i}$ (blue curves), which were derived using Theorem \ref{approx theorem} and the computational algorithm of \cite{delimpa}. The red spherical segments and the corresponding cones, show the boundaries of the overapproximations $\hat{\reg}_{2,j}$ for regions $\reg_{2,j}$, obtained as described in Section \ref{overapprox section}. The sets $\hat{\reg}_{i,j}$ are relatively accurate overapproximations of $\reg_{i,j}$. The accuracy can be improved by increasing the number of cones or reducing the bisection's step size, at the expense of heavier computations.
	\begin{figure}[!h]
		\centering
		\includegraphics[width=3in]{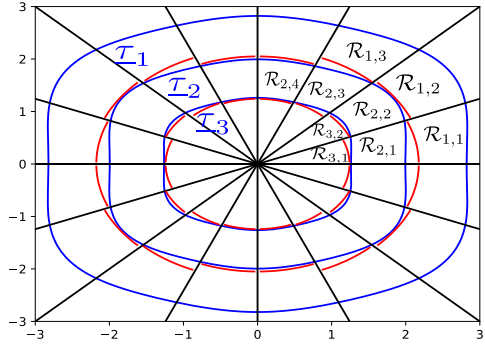}
		\caption{State-space partitioning into 48 regions $\reg_{i,j}$, delimited by 16 cones $\cone_j$ (black rays) and 3 approximations of isochronous manifolds $\underline{M}_{\underline{\tau}_i}$ (blue curves), for times $\{\underline{\tau}_1, \underline{\tau}_2, \underline{\tau}_3\} = \{4, 8, 20\}\cdot10^{-4}$. The red segments delimit the overapproximations $\hat{\reg}_{2,j}$ of regions $\reg_{2,j}$.}
		\label{example regions}
	\end{figure}
	
	Fig. \ref{times_example} shows the timing lower bounds $\underline{\tau}_{\reg_{i,j}}$, which were predefined, and upper bounds $\overline{\tau}_{\reg_{i,j}}$ for each region $\reg_{i,j}$, which were obtained via reachability analysis, as described in Section 4.3. Recall from Remark \ref{upper bounds region 1} that the timing upper bounds for the regions $\reg_{3,j}$ are fixed arbitrarily, such that $\overline{\tau}_{\reg_{3,j}}\geq\underline{\tau}_{\reg_{3,j}}$. Here, we fixed them in such a way that they follow the spatial trend of the upper bounds $\overline{\tau}_{\reg_{i,j}}$ ($i\neq3$). By Proposition \ref{precision of abstraction}, the abstraction's precision is $\epsilon \leq 0.0035$. 
	\begin{figure}[!h]
		\centering
		\includegraphics[width=3in]{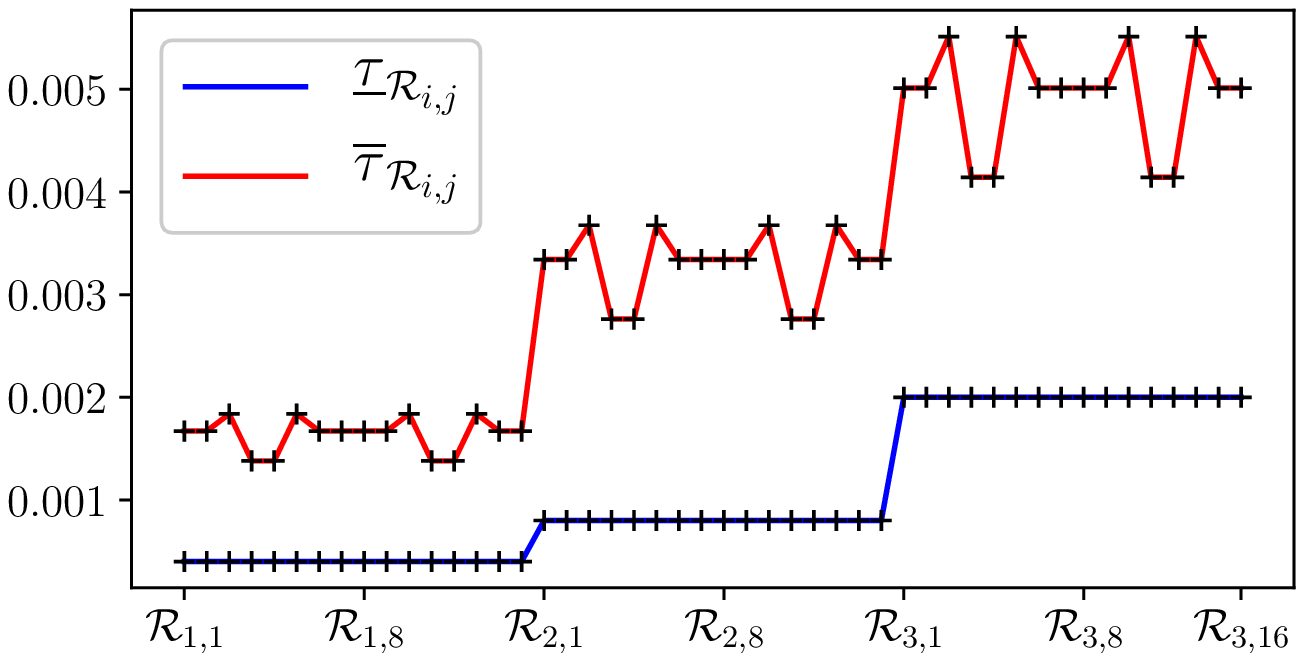}
		\caption{Lower bounds $\underline{\tau}_{\reg_{i,j}}$ and upper bounds $\overline{\tau}_{\reg_{i,j}}$ of inter-event times for each region $\reg_{i,j}$.}
		\label{times_example}
	\end{figure}
	
	In Fig. \ref{transitions_example}, each dotted point $(a,b)$ denotes a transition from region $a$ to region $b$. First, we observe that there exists a transition from each region $\reg_{3,j}$ to any region $\reg_{3,k}$. This is expected, as these regions intersect at the origin. However, note that each one generally corresponds to a different set of transitions. Hence, they do serve as distinct states of the abstraction. Overall, there are 536 transitions. The reachability analysis was carried out with dReach \cite{dreach}.
	\begin{figure}[!h]
		\centering
		\includegraphics[width=3in]{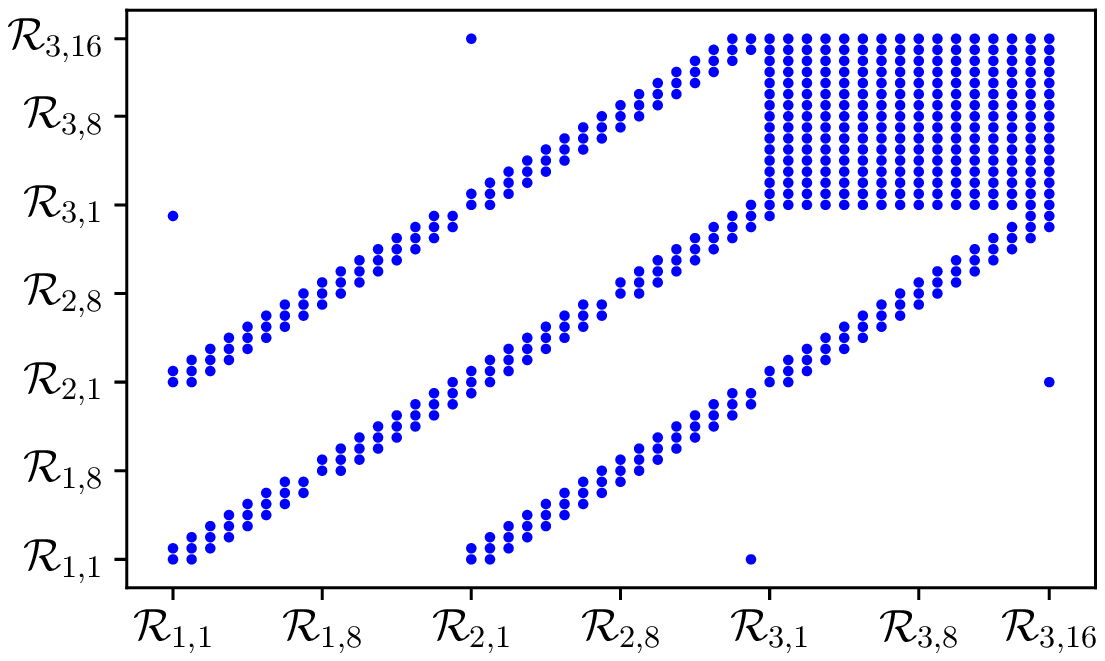}
		\caption{Transitions between regions $\reg_{i,j}$.}
		\label{transitions_example}
	\end{figure}  
	
	Finally, we carried out a simulation to verify our results. The initial condition is set to $\begin{bmatrix}
	1.5 &2
	\end{bmatrix}^{\top}$ and the simulation duration is $0.8$s. The red line in Fig. \ref{sim_times} shows the evolution of inter-event times of the ETC system, and the blue lines represent the bounding intervals $[\underline{\tau}_{\reg_{i,j}}, \overline{\tau}_{\reg_{i,j}}]$ generated by the abstraction, i.e. its output sequence. It is obvious that the abstraction's output sequence confines the ETC system's inter-event times, as expected. Moreover, the system's trajectory starting from $\reg_{1,3}$ followed the spatial path: $\reg_{1,3}\to\reg_{1,3}\to\dots\to\reg_{1,2}\to\reg_{1,2}\to\dots\to\reg_{2,2}\to\reg_{2,2}\to\dots\to\reg_{3,2}\to\reg_{3,2}\to\dots$. Indeed, Fig. \ref{transitions_example} shows that the followed path is contained in the abstraction's transition set.
	\begin{figure}[!h]
		\centering
		\includegraphics[width=3in]{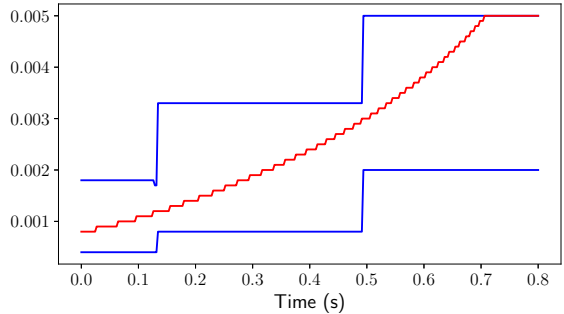}
		\caption{The  ETC system's inter-event times' evolution (red line) and the bounding intervals $[\underline{\tau}_{\reg_{i,j}}, \overline{\tau}_{\reg_{i,j}}]$ generated by the abstraction (blue lines) during a simulation.}
		\label{sim_times}
	\end{figure}

\section{Discussion and Future Work}\label{discussion section}
	We constructed abstractions of nonlinear homogeneous ETC systems that can be employed for traffic scheduling in NCS, as shown in \cite{dieky2015symbolic}, thus contributing to the solution of a prominent problem of ETC. Next step is extending this method to general nonlinear systems and triggering functions. For this purpose, the procedure proposed in \cite{anta2012isochrony} can be used, which renders any system/triggering function homogeneous by embedding it into $\real^{n+1}$ and adding an extra variable $w$. In this case, the original system's trajectories are the ones of the extended homogeneous one confined to the $w=1$ plane. Thus, approximations of the extended system's isochronous manifolds could be used (see \cite{anta2012isochrony,delimpa}). However, new challenges arise, as e.g. the extended system's isochronous manifolds obtain a singularity at the origin.
	\bibliography{bibliography_draft.bib}
	\bibliographystyle{IEEEtran}
\end{document}